\newtheorem{theorem}{Theorem}
\newtheorem{lemma}{Lemma}
\newtheorem{property}{Property}
\theoremstyle{definition}
\newtheorem{definition}{Definition}
\newcommand{\remove}[1]{{}}
\newcommand{\note}[1]{{\color{red} #1}}
\newcommand{\bluenote}[1]{{\color{blue} #1}}
\newcommand{\T}{\mathcal{T}}
\newcommand{\I}{\mathcal{I}}
\newcommand{\ddd}{,\ldots ,}
\title{Reconfiguring Ordered Bases of a Matroid}
\author{%
Anna Lubiw$^*$
\and
Vinayak Pathak$^*$
}
\date{\today}
\begin{document}

\maketitle

\begin{abstract}
For a matroid with an ordered (or ``labelled'') basis, a basis exchange step removes one element with label $l$ and replaces it by a new element that results in a new basis, and with the new element assigned label $l$.
We prove that one labelled basis can be reconfigured to another if and only if for every label, the initial and final elements with that label lie in the same connected component of the matroid.  Furthermore, we prove that when the reconfiguration is possible, the number of basis exchange steps required is $O(r^{1.5})$ for a rank $r$ matroid.  For a graphic matroid we improve the bound to $O(r \log r)$.   
\end{abstract}

\section{Introduction}
\label{sec:introduction}

Broadly speaking, ``reconfiguration'' is about changing one 
structure to a related one 
via discrete steps.  
Examples include changing one triangulation of a point set in the plane to another via ``flips'', or changing one independent set of a graph to another by adding/deleting single vertices.
More broadly, other examples are to   
sort a list by swapping pairs of adjacent elements, or to change a basic solution of a linear program to an optimum one via pivots.
Central questions are: to test whether one configuration can be changed to another using the discrete steps; and to compute, or bound, the number of steps required.  
Reconfiguration problems have been considered for a long time, and explicit attention was drawn to complexity issues by Ito et al.~\cite{Ito-2011} and by van den Heuvel~\cite{van-den-Heuvel}. 


A fundamental result about matroids is that any basis $B$ can be reconfigured to any other basis $B'$ using a sequence of \emph{basis exchange} steps.  In each \emph{basis exchange} one element of the current basis is replaced by a new element to yield a new basis.  The number of basis exchange steps needed to reconfigure $B$ to $B'$ is the difference $|B - B'|$.
 
In this paper we explore a ``labelled'' version of basis reconfiguration where the elements of each of the two initial bases are labelled with unique labels from $\{1, \ldots, r\}$, where $r$ is the rank of the matroid.  
For a basis exchange in this labelled setting, the new element gets the same label as the replaced element.
More precisely, if $l$ is the labelling function on $B$ and $e\in B$ is replaced with $e'\in E\setminus B$, then $e'$ gets assigned the label $l(e)$. 
In standard matroid terminology, a ``labelled'' basis is an ``ordered'' basis. 

We consider two questions:
\begin{enumerate}
\item When can one labelled basis of a matroid be reconfigured to another labelled basis of the matroid using a sequence of basis exchanges? 
\item What is the number of basis exchanges needed in the worst case?
\end{enumerate}

Note that such labelled reconfiguration is not always possible---for example, a matroid may have a single basis $B$, in which case there is no way to reconfigure from one labelling of $B$ to a different labelling.   

\medskip\noindent{\bf Results.} In this paper we prove that one labelled basis can be reconfigured to another if and only if for every label, the element with that label in the first basis and the element with that label in the second basis lie in the same connected component of the matroid.  
Equivalently, for a given basis, the permutations of labels achievable by sequences of basis exchange steps is a product of symmetric groups, one for each connected component of the matroid.  

Furthermore, we prove that if one labelled basis can be reconfigured to another then 
$O(r^{1.5})$ basis exchange steps always suffice, where $r$ is the rank of the matroid.

In the special case of graphic matroids, 
our problem is the following: given two spanning trees of an $n$-vertex graph, with the edges of the spanning trees labelled with the labels $\{1, 2, \ldots, n-1\}$, reconfigure the first labelled spanning tree to the second via basis exchange steps.
Reconfiguration is possible if and only if for every label, the edge with that label in the first spanning tree and the edge with that label in the second spanning tree lie in the same 2-connected component of the graph.
This was proved by Hernando et al.~\cite{HHMR03}.
In this case we can prove a tighter bound: $O(n \log n)$ basis exchange steps are always sufficient. 

The rest of the paper is organized as follows.  Subsections~\ref{sec:preliminaries} and~\ref{sec:background} contain notation and background.  In Section~\ref{sec:graphic-matroid} we prove the results for graphic matroids, and in Section~\ref{sec:general-matroids} we prove the results for general matroids.

\subsection{Preliminaries}
\label{sec:preliminaries}

For a matroid $M$ of rank $r$, a \emph{labelled basis} (or ``ordered basis'') is a tuple $\T = (B,l)$ where $B$ is a basis and $l:B\rightarrow \{1\ddd r\}$ is a function that assigns a unique label to each element of $B$.
For label $i$, the element with that label is given by $l^{-1}(i)$.
If a basis exchange step replaces $e\in B$ with $e'\in E\setminus B$, then $e'$ is assigned the label $l(e)$. 
Two bases are said to be the same if they have the same elements and the elements are assigned the same labels.


\medskip\noindent{\bf Matroid Properties.}
For basic definitions and properties of matroids, see Oxley~\cite{Oxley-2011}.  In the remainder of this section, we summarize the properties that we will use.  Note that we are now referring to standard unlabelled matroids.

\remove{
\bluenote{Rather than ``flip'' say ``exchange''.  Also ``basis exchange graph'' instead of ``flip graph'', ``matroid base polytope''.  Be explicit about what an ``exchange'' is for spanning trees.  Define ``reconfiguration''.  Do not use ``flip distance''.} 

\note{Some notation.}
An ordered basis of a matroid is defined in the same way as an edge-labelled triangulation, i.e., an ordered basis $\T = (B, l)$ is a tuple where $B$ is a basis and $l:B\rightarrow \{1\ddd |B|\}$ is a function that assigns a unique label to each element of $B$. A reconfiguration step is now a basis exchange operation where the new element gets the same label as the replaced element, i.e., if $e\in B$ is replaced with $e'\in E\setminus B$, then $e'$ gets assigned the label $l(e)$. Two bases are said to be the same if they have the same elements and the elements are assigned the same labels. From now on, we will use \emph{flip} to denote one reconfiguration step and \emph{flip graph} to denote the reconfiguration graph. Since the labels are unique, we will frequently use the labels to refer to the elements they are assigned to. If a flip sequence $F$ transforms the ordered basis $\T_1$ to $\T_2$ such that label $i$ is assigned to $e_1$ in $\T_1$ and to $e_2$ in $\T_2$, we say that $F$ \emph{moves} label $i$ from $e_1$ to $e_2$. For label $i$, the element with that label is given by $l^{-1}(i)$. When we are talking specifically about graphic matroids, we will use the terms spanning tree and edge instead of basis and element respectively.
}

\remove{
The material in this section and in Section~\ref{section-more-matroids} is based on the exposition in the textbook by Oxley~\cite{Oxley-2011}.

Matroids are set systems that were originally defined to study an abstract theory of dependence. Given an $m\times n$ matrix, define the set system $M = (E, \I)$ such that $E = [1..n]$ and $I\in\I$ if and only if the set of columns corresponding to $I$ forms an independent set of vectors. This set system is an example of a matroid.

\begin{definition}
A matroid is a tuple $M = (E, \I)$ where $E$ is a set and $\I$ is a set of its subsets such that:
\begin{enumerate}
\item[(I1)] $\emptyset \in \I$.
\item[(I2)] If $I\in\I$ and $I'\subseteq I$ then $I'\in\I$.
\item[(I3)] If $I_1$ and $I_2$ are in $\I$ and $|I_2|>|I_1|$ then there exists an element $e\in I_2 - I_1$ such that $I_1\cup \{e\}\in\I$.
\end{enumerate}
\end{definition}

It is easy to see that these three properties are satisfied for the set system formed by independent sets of column vectors of a matrix. The sets of $\I$ are called \emph{independent sets} and property (I3) above is often called the \emph{independence augmentation} property. Many naturally-occurring set systems, often in unrelated areas, happen to be matroids. Since the linearly independent subsets of columns of a matrix form the independent sets of a matroid, it is apparent that sets of affinely independent points should also form a matroid. For example, given a set of points in a plane, the set of subsets of size at most two and subsets of three non-collinear points forms a matroid. A popular matroid that appears in graph theory is the one formed by the set of forests of a graph. Given a graph $G$, define the matroid $M = (E, \I)$ where $E$ conisists of all edges of $G$ and $I\in\I$ if and only if the edges corresponding to $I$ form a forest (i.e., do not contain a cycle) in $G$. A matroid formed this way is called a \emph{graphic matroid}.

A \emph{maximal independent set} of a matroid $M = (E, \I)$ is a set $I\in\I$ such that no superset of $I$ is a member of $\I$. A maximal independent set is also called a \emph{basis} (note the correspondence with bases of a vector space). The set of all bases is often denoted by the letter $\mathcal{B}$. The following two properties of bases are easy to show.

\begin{property}
\label{prop:rank}
If $B_1$ and $B_2$ are both bases of a matroid $M = (E, \I)$ then $|B_1| = |B_2|$.
\end{property}
\begin{proof}
For contradiction, assume $|B_1| < |B_2|$. Then from property (I3), there exists an element $e\in B_2 - B_1$ such that $B_1\cup\{e\}\in\I$. Thus $B_1$ was not maximal.
\end{proof}
}

Recall the basis exchange property of matroids:

\begin{property}
\label{prop:basis-exchange}
For any two bases $B_1$ and $B_2$ of a matroid $M$ 
and an element $x\in B_1 - B_2$, there exists $y\in B_2-B_1$ such that $(B_1 - \{x\})\cup\{y\}$ is also a basis.
\end{property}

By Property \ref{prop:basis-exchange} any basis of a matroid can be reconfigured into any other using at most $r$ basis exchange steps.

\remove{
Property~\ref{prop:rank} helps one define a useful quantity---the \emph{rank} of a matroid, denoted $r(M)$---as $r(M) = |B|$ for any basis $B$. Property~\ref{prop:basis-exchange}, which we will call the \emph{basis exchange property} from now on, already contains hints of reconfiguration. Indeed, given a matroid $M = (E, \I)$ and a basis $B$, define a reconfiguration step as replacing an element $e\in B$ with an element $e'\in E\setminus B$ such that the new set is also a basis. This defines a reconfiguration graph that has a vertex for each basis and edge for each reconfiguration step. The following is an easy consequence of Property~\ref{prop:basis-exchange}.

\begin{theorem}
The reconfiguration graph of bases of a matroid $M$ is connected and has diameter at most $r(M)$.
\end{theorem}
\begin{proof}
Given any two bases $B_1$ and $B_2$ of $M$, one application of the basis exchange property increases $|B_1\cap B_2|$ by one. Since the minimum and maximum possible values of $|B_1\cap B_2|$ are 0 and $r(M)$ respectively, the total number of applications of Property~\ref{prop:basis-exchange} required to reconfigure $B_1$ into $B_2$ in the worst case is $r(M)$.
\end{proof}

Matroid reconfiguration was first considered by Ito et al.~\cite{IDHPSUU11} and the theorem above is from that paper, although it is a well known result in the literature on matroids.

A basis of a matroid has properties similar to a spanning tree of a graph. One can also define a \emph{circuit} of a matroid to have properties similar to a cycle of a graph. A circuit of a matroid is defined as a minimal non-independent set, i.e., a set $C\subseteq E$ which is not independent but every subset of $C$ is independent. The set of circuits of a matroid satisfies some nice properties, including the following \emph{circuit exchange} property.
}

\remove{
Matroids also satisfy the following \emph{circuit exchange} property. \note{Check if you still use this.}

\begin{property}
\label{prop:circuit-exchange}
Given any two circuits $C_1$ and $C_2$, an element $e\in C_1\cap C_2$, and an element $f\in C_2 - C_1$, there exists another circuit $C_3$ such that $f\in C_3 \subset C_1\cup C_2 - \{e\}$.
\end{property}
}

The pairs of elements that can participate in exchanges can be characterized exactly.
Let $B$ be a basis.  Adding an element $e\notin B$ to $B$ creates a unique circuit $C$ called the 
\emph{fundamental circuit} of $e$ with respect to $B$.
\begin{property}~\cite[Exercise 5, Section 1.2]{Oxley-2011}
\label{prop:basis-exchange-advanced}
Given a basis $B$, and an element $e\notin B$, let $C$ be the fundamental circuit of $e$ with respect to $B$. 
Then the set of elements $e'$ such that $(B - \{e'\})\cup\{e\}$ is a basis are precisely the elements $e' \in C$.
\end{property}

The \emph{fundamental graph} of basis $B$, denoted $S_B$, is  the bipartite graph with a vertex for each element of the matroid, and an edge $(e,f)$ for every $e \in B$ and $f \in E - B$ such that $e,f$ form a basis exchange, i.e.~$(B - \{e\}) \cup \{f\}$ is a basis.  Note that the closed neighbourhood of $f \in E-B$ is $f$'s fundamental circuit.

\remove{
\begin{proof}
\note{The property was strengthened.  Check the proof.}
First of all, suppose, for contradiction, that adding $e$ to $B$ creates more than one circuit, $C$ and $C'$ being any two of them. Clearly, $e$ must be a part of both, and therefore, from Property~\ref{prop:circuit-exchange}, there exists another circuit $C''\subseteq C\cup C' -\{e\}$. This is not possible since $C\cup C'-\{e\}\subseteq B$ and therefore is independent.

Given that $C$ is unique, it is also clear that removing any element of $C$ will give us a basis. 
\end{proof}
}

\remove{
\smallskip\noindent{\bf Contraction}
\note{Rewrite the following.}
We would like to generalize the results from the previous section to more general matroids. However, we used many concepts about graphs without talking about their analogs in the world of matroids. The theory of matroids has very elegant generalizations for some graphic concepts and some graphic concepts have no matroidal analogs. In this section, we explore what can and cannot be generalized to matroids.

Our proof for graphic matroids relied on induction using edge contraction. Edge deletion and edge contraction are two commonly used operations for carrying out inductive arguments on graphs. Edge deletion has an obvious matroidal analog. Given a matroid $M = (E, \I)$, deleting a set $F\subseteq E$ of edges results in the matroid $M' = (E', \I')$, denoted by $M\setminus F$ such that $E' = E\setminus F$ and for each $I\in\I$, we add the set $I' = I\setminus F$ to $\I'$. Edge contractions also have a matroidal analog that can be elegantly defined using matroid duals.

Given a matroid $M = (E, \I)$, let $B^* = \{E - B~|~B\in B(M)\}$. Then it is known that $B^*$ is the set of bases of a matroid on $E$. The matroid whose set of bases is $B^*$ is called the \emph{dual} of $M$ and is denoted by $M^*$. One can then define edge contraction as follows. Given a matroid $M = (E, \I)$ and a set $F\subseteq E$ of edges, the matroid obtained on contracting the edges of $F$ is defined as $M/F = (M^*\setminus F)^*$, i.e., we delete the edges in the dual. This definition is justified because for a graphic matroid, this operation corresponds exactly to the operation of contracting the edges of the graph, i.e., if $M$ is a graphic matroid defined on a graph $G$, then the set of forests of $M/F$ is exactly the set of forests of the graph obtained on contracting $F$ in $G$. We will not prove this statement here, but a proof can be found, for example, in~\cite{Oxley-2011}.
}

For a matroid $M$ with element set $E$ and a set $T \subseteq E$, the matroid that results from contracting $T$ is denoted $M/T$.

\begin{property}
\label{prop:rank-after-contraction}
For matroid M, and sets $T \subseteq E$ and $X \subseteq E - T$
$$r_{M/T}(X) = r_M(X \cup T) - r_M (T).$$
\end{property}

This property implies that we can augment an independent set as follows:

\begin{property}
\label{prop:incremental-independent}
If $T$ is independent in matroid $M$ and $A$ is independent in $M/T$ then $T \cup A$ is independent in $M$.
\end{property}

The concept of graph connectivity generalizes to matroids. For any matroid $M$, define the relation $\psi$ on the elements of $M$ by $e~\psi~f$ if and only if either $e = f$ or there exists a circuit of $M$ that contains both $e$ and $f$. It can be shown that the relation $\psi$ is an equivalence relation and we say that each equivalence class defines a \emph{connected component}.  In the case of a graphic matroid, the equivalence classes are the \emph{2-connected components} or \emph{blocks} of the graph.

We will use Menger's theorem for graphs, see~\cite[chapter 9]{Schrijver}, which says that the maximum number of vertex-disjoint paths from vertex $s$ to vertex $t$ is equal to the minimum number of vertices whose removal disconnects $s$ and $t$.
We will also use the generalization of Menger's theorem to matroids, 
which is known as Tutte's linking theorem~\cite{Tutte-linking}, see~\cite{Oxley-2011}. 
The statement of this theorem will be given where we use it in Section~\ref{section:matroid-tight-bound}. 

\remove{
We will need the following notation~\cite{Oxley-2011}.
For disjoint sets $X, Y \subseteq E$, the analogue of 
\begin{equation*}
\sqcap_M(X,Y) = r_M(X) + r_M(Y) - r_M(X \cup Y)
\end{equation*}
and  
\begin{equation*}
\kappa_M(X,Y) = \min \{ r_M (S) + r_M(E-S) - r_M (M) : X \subseteq S \subseteq E-Y \}.
\end{equation*}
}

\subsection{Relationship to Triangulations}
\label{sec:background}

Our present study of reconfiguring labelled matroid bases was prompted by 
related results on reconfiguring labelled triangulations~\cite{BLPV13}.

The set of triangulations of a point set (or a simple polygon) in the plane has some matroid-like properties. In particular, given a point set $P$, let $E$ be the set of all line segments $d$ such that the endpoints of $d$ are in $P$ and no other point of $P$ lies on $d$, and let $\I$ be the set of all subsets of pairwise non-crossing segments of $E$. Then 
the set system $(E, \I)$ is an independence system, but fails the third matroid axiom:  if $I_1$ and $I_2$ are in $\I$ and $|I_2| > |I_1|$, there does not always exist an element $e$ of $I_2 - I_1$ such that $I_1 \cup \{e\} \in \I$.  (As a simple example consider five points $1, 2, 3, 4, 5$ in convex position where $I_1$ contains segments $(1,3)$ and $(1,4)$ and $I_2$ contains segment $(2,5)$.) 
However, the maximal sets in $\I$ all have the same cardinality---this is because maximal sets of non-crossing segments correspond to triangulations and all triangulations contain the same number of edges.

The failure of the third matroid axiom for triangulations means that the greedy algorithm does not in general find a minimum weight triangulation of a point set, and in fact the problem is NP-hard even for Euclidean weights~\cite{MR08}.

On the other hand, triangulations share some of the reconfiguration properties of matroids---any triangulation can be reconfigured to any other triangulation via a sequence of elementary exchange steps called ``flips'' where one segment is removed and a new segment (the opposite chord of the resulting quadrilateral, if it is convex) is added~\cite{BH09}.
Furthermore, there is evidence that results analogous to the ones we prove here for reconfiguring labelled matroid
bases carry over to labelled triangulations~\cite{BLPV13}.
One  intriguing possibility is a more general result that encompasses both the case of matroids and of triangulations.

\remove{
\section{Problem statement - move to Intro}
An ordered basis of a matroid is defined in the same way as an edge-labelled triangulation, i.e., an ordered basis $\T = (B, l)$ is a tuple where $B$ is a basis and $l:B\rightarrow \{1\ddd |B|\}$ is a function that assigns a unique label to each element of $B$. A reconfiguration step is now a basis exchange operation where the new element gets the same label as the replaced element, i.e., if $e\in B$ is replaced with $e'\in E\setminus B$, then $e'$ gets assigned the label $l(e)$. Two bases are said to be the same if they have the same elements and the elements are assigned the same labels. From now on, we will use \emph{flip} to denote one reconfiguration step and \emph{flip graph} to denote the reconfiguration graph. Since the labels are unique, we will frequently use the labels to refer to the elements they are assigned to. If a flip sequence $F$ transforms the ordered basis $\T_1$ to $\T_2$ such that label $i$ is assigned to $e_1$ in $\T_1$ and to $e_2$ in $\T_2$, we say that $F$ \emph{moves} label $i$ from $e_1$ to $e_2$. For label $i$, the element with that label is given by $l^{-1}(i)$. When we are talking specifically about graphic matroids, we will use the terms spanning tree and edge instead of basis and element respectively.

It is easy to show now that the flip graph is not always connected. For example, let $G$ be a path on $n$ vertices. Clearly, $G$ has only one spanning tree, which is $G$ itself. Now this spanning tree gives us several ordered bases depending on how we label its edges. However, none of the elements of this basis can be flipped and thus any ordered basis of $G$ cannot be reconfigured to any other ordered basis.

This observation makes the following questions interesting:
\begin{enumerate}
\item What properties guarantee that we can reconfigure one ordered basis to another?
\item What is the number of reconfiguration steps needed in the worst case?
\end{enumerate}
}

\section{Reconfiguring ordered bases of a graphic matroid}
\label{sec:graphic-matroid}

In this section we concentrate on graphic matroids.  We characterize when one labelled spanning tree of an $n$-vertex graph can be reconfigured to another labelled spanning tree of the graph.  Our first result provides a bound of $O(n^2)$ exchange steps for the reconfiguration.  In the second subsection we improve this to $O(n \log n)$.
The first result (with the $O(n^2)$ bound) extends immediately to general matroids, but we give the details for the graphic matroid case for the sake of readers who may wish to learn only about labelled reconfiguration of spanning trees.

\subsection{When are two ordered bases reconfigurable?}

\begin{theorem}
\label{theorem-spanning-tree-characterization}
Given two labelled spanning trees $\T_1$ and $\T_2$ of an $n$-vertex graph $G$, we can reconfigure one to the other if and only if for each label $i$, the edge with that label in $\T_1$ and the edge with that label in $\T_2$ lie in the same 2-connected component of $G$. Moreover, when reconfiguration is possible, it can be accomplished with $O(n^2)$ basis exchange steps. 
\end{theorem}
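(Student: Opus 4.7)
The necessity direction is short. Any single basis exchange replaces a tree edge $e$ with a non-tree edge $e'$ such that $(T-e)+e'$ is again a spanning tree, which forces $e$ and $e'$ to lie on a common cycle (the fundamental cycle of $e'$ with respect to $T$) and hence in the same 2-connected component of $G$. Chaining this along a reconfiguration sequence, the initial and final edges carrying any fixed label $i$ must lie in the same 2-connected component, establishing the ``only if'' direction.

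For the ``if'' direction with the $\Oh(n^2)$ bound, I plan first to reduce to the 2-connected case. The block decomposition of $G$ partitions its edges, and the hypothesis together with the observation above implies that the restrictions of $\T_1$ and $\T_2$ to any block $B$ are spanning trees of $B$ carrying the same set of labels. Since the fundamental cycle of any exchange lies inside a single block, exchanges in different blocks do not interact, and it suffices to bound the reconfiguration cost block by block and add the bounds.

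Within a 2-connected block on $n$ vertices, I would argue by induction on $n$ using a ``place-one-label-then-shrink'' strategy. Pick any edge $e$ of $\T_2$, let $i = l_2(e)$, and let $e_c$ be the edge of the current tree carrying label $i$. The plan is to move label $i$ from $e_c$ to $e$ using $\Oh(n)$ exchanges and then recurse on a smaller 2-connected instance with label $i$ removed. The sub-lemma I would prove is that, in a 2-connected graph, a single label can be slid from one tree edge to any other tree edge (of the same block) in $\Oh(n)$ exchanges: one walks along a sequence of intermediate tree edges connecting $e_c$ to $e$ and, at each step, uses an auxiliary non-tree edge whose fundamental cycle contains both the current and the next tree edge to transport the label by a short constant-length ``swap pattern'' (of the kind that, starting from $T = \{\dots, e(i), e'(j), \dots\}$ with a non-tree edge $f$ whose fundamental cycle contains $e$ and $e'$, applies three exchanges $e \to f$, $e' \to e$, $f \to e'$ to swap the labels on $e$ and $e'$). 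Since the block is 2-connected, a walk of length $\Oh(n)$ between any two tree edges with the required auxiliary non-tree edges exists, so placing one label costs $\Oh(n)$ exchanges. Summed over $n-1$ labels in the block and over all blocks, this gives $\Oh(n^2)$ exchanges in total.

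The main obstacle I anticipate is the shrinking step. Contracting the edge $e$ where label $i$ has been placed may split the 2-connected block into several smaller blocks, and in general two remaining edges that previously shared the block could end up in distinct blocks of $G/e$, which would break the inductive hypothesis for the remaining labels. To handle this I would either choose $e$ carefully (so that $G/e$ does not introduce new cut vertices separating pairs of remaining labels), or switch to an ear-decomposition induction, peeling the block one ear at a time and placing, in each step, only the labels whose target edges lie on the outermost ear; the rest of the labels then live in a strictly smaller 2-connected subgraph and the invariant is trivially maintained. Either refinement keeps the cost at $\Oh(n)$ per label and preserves the $\Oh(n^2)$ total.
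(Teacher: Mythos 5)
Your ``only if'' direction matches the paper's. For the ``if'' direction the paper takes a route that avoids exactly the two obstacles you flagged. Rather than placing the labels of $\T_2$ in situ and then shrinking the block, the paper first reconfigures both $\T_1$ and $\T_2$, ignoring labels, to a common spanning tree $B$ using $O(n)$ unlabelled exchanges. This reduces the problem to rearranging a labelling of a single fixed tree. Then, for each label $i$ whose current position $e_1$ differs from its target $e_2$, the paper takes a cycle $C$ of $G$ through both $e_1$ and $e_2$ (which exists since they share a block) and performs a clean \emph{transposition} of the labels of $e_1$ and $e_2$: it inducts on $t = |C \setminus B|$, showing $6t-3$ exchanges suffice, with the base case $t=1$ being your 3-exchange pattern and the inductive step bringing another edge of $C$ into the tree, recursing, and undoing. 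Because the net effect of that subroutine is a pure swap of two labels leaving everything else fixed, no contraction is ever needed: once label $i$ sits on $e_2$, later swaps are between pairs of edges both distinct from $e_2$, so $e_2$'s label is never disturbed.

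The genuine gap in your proposal is at the point you yourself noted but did not close. Your ``slide'' cyclically permutes the labels on the intermediate tree edges, so you then need to either contract $e$ or forbid touching $e$ in the future. Contraction can split the block, breaking the hypothesis for the remaining labels (as you observed); forbidding $e$ can destroy 2-connectivity. Neither the careful choice of $e$ nor the ear-decomposition variant is carried out, and it is not obvious that either preserves both the invariant and the per-label $O(n)$ bound. A second, smaller gap: your sub-lemma asserts that a walk of $O(n)$ tree edges, each consecutive pair covered by a common fundamental cycle, links $e_c$ to $e$; this is not automatic (the fundamental graph has $\Theta(m)$ vertices and no a priori $O(n)$ diameter bound) and is exactly what the paper's cycle-based induction replaces, since a simple cycle through $e_1$ and $e_2$ has length at most $n$. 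I'd suggest adopting the paper's ``reconfigure both to a common $B$, then swap labels pairwise'' reduction, which makes the transposition subroutine, and hence the bookkeeping, much cleaner.
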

\begin{proof}
The `only if' direction is clear because an exchange of edge $e$ to $e'$ can be performed only if both $e$ and $e'$ lie in the same 2-connected component. For the `if' direction, we will provide an explicit exchange sequence to reconfigure $\T_1$ to $\T_2$.

First, pick any (unlabelled) spanning tree $B$ and reconfigure both $\T_1$ and $\T_2$ to $B$ while ignoring the labels. This takes $O(n)$ exchange steps. Let $\sigma_1$ be the sequence of exchanges that reconfigures $\T_1$ to $B$ and $\sigma_2$ be the sequence that reconfigures $\T_2$ to $B$. Obviously, the labels of the edges of $B$ obtained from the two sequences will not match in general. 
Below, we give an exchange sequence $\sigma$ of length at most $O(n^2)$ to rearrange the labels in $B$. Thus performing $\sigma_1$ followed by $\sigma$ followed by the reverse of $\sigma_2$ reconfigures $\T_1$ to $\T_2$ with $O(n^2)$ exchange steps.

The problem is now reduced to the following: given one spanning tree $B$ and two labellings of it,  $\T_1 = (B, l_1)$ and $\T_2 = (B, l_2)$, reconfigure $\T_1$ to $\T_2$ using $O(n^2)$ exchanges. We will do this by repeatedly swapping labels. 
More precisely, let $i$ be a label, let $e_1 = l_1^{-1}(i)$ be the edge with label $i$ in $\T_1$ and let $e_2 =l_2^{-1}(i)$ be the edge with label $i$ in $\T_2$, and suppose that $e_1 \ne e_2$.  We will show that with $O(n)$ exchanges we can swap the labels of $e_1$ and $e_2$ in $\T_1$ while leaving all other labels unchanged.  This moves label $i$ to the correct place for $\T_2$, and repeating over all labels solves the problem and takes $O(n^2)$ exchanges. 

Since $e_1$ and $e_2$ lie in the same 2-connected component of $G$ (by hypothesis), there must exist a cycle $C$ of $G$ that goes through both $e_1$ and $e_2$.  
Let $t$ be the number of edges of $C \setminus B$.  Then $t \ge 1$.  We will argue by induction on $t$ that $6t-3$ exchanges suffice to swap the labels of $e_1$ and $e_2$.
Let $f$ be an edge of $C \setminus B$.

If $f$ is the only edge of $C \setminus B$, then we can perform the swap directly: use Property~\ref{prop:basis-exchange-advanced} to exchange $e_1$ with $f$, $e_2$ with $e_1$, and, finally, $f$ with $e_2$. This sequence returns us to $B$ and swaps the labels of $e_1$ and $e_2$ while leaving all other labels unchanged.  Thus the case of $t=1$ can be solved with 3 exchanges.

More generally,  $f$ is not the only edge of $C \setminus B$.  Adding $f$ to $B$ creates a cycle $C'$ that must contain an edge $f' \in B \setminus C$.  Using Property~\ref{prop:basis-exchange-advanced} we exchange $f$ with $f'$.  The result is a new spanning tree $B'$ whose intersection with $C$ is strictly increased, so $C \setminus B'$ is smaller. 
Also note that $B'$ contains $e_1$ and $e_2$. 
By induction, we can swap the labels of $e_1$ and $e_2$ in $B'$ with at most $6(t-1) -3$ exchanges, leaving all other labels unchanged.  After that, we exchange $f'$ and $f$ to return to $B$ with original labels except that the labels of $e_1$ and $e_2$ are now swapped.  The total number of exchanges is at most $6(t-1) -3 + 6 = 6t -3$
\end{proof}

\subsection{Tightening the bound}
\label{section:graphic-tight-bound}
In this section we show 
that the $O(n^2)$ bound on the number of exchanges from the previous section can be improved. Note that the common spanning tree $B$ we chose in the previous section, to reconfigure both $\T_1$ and $\T_2$ to, was completely arbitrary. We could have perhaps chosen a spanning tree that made the task of swapping labels easier. That is precisely what we do in this section.

Observe that it is sufficient to consider a 2-connected graph $G$ since for a general graph we can just repeat the argument inside each of its 2-connected components. 
We will construct a spanning tree $B$ of $G$ whose fundamental graph (as defined in Section~\ref{sec:preliminaries}) has diameter $O(\log n)$.  This proves our result based on:

\begin{lemma}
Let $G$ be a 2-connected graph, and $B$ be a spanning tree of $G$
whose fundamental graph $S_B$ has diameter $d$.  Then for any two edges of $B$ there is an exchange sequence of length $O(d)$ that swaps the labels of those two edges while leaving other labels unchanged.
%
\label{lemma:swap}
\end{lemma}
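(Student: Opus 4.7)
The plan is to induct on $d' = d_{S_B}(e_1, e_2)$, which by hypothesis is at most $d$. Since $S_B$ is bipartite with parts $B$ and $E - B$, and both $e_1, e_2$ lie in $B$, the distance $d'$ is even; it is at least $2$ because $e_1 \neq e_2$.

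For the base case $d' = 2$, the edges $e_1$ and $e_2$ share a common neighbour $f \in E - B$ in $S_B$, so both lie in the fundamental circuit of $f$ with respect to $B$. This is precisely the ``single non-tree edge'' situation already handled inside the proof of Theorem~\ref{theorem-spanning-tree-characterization}: the three exchanges $e_1 \leftrightarrow f$, then $e_2 \leftrightarrow e_1$, then $f \leftrightarrow e_2$ return us to $B$ with the labels of $e_1$ and $e_2$ swapped and every other label unchanged. Each of the three steps is justified by Property~\ref{prop:basis-exchange-advanced}, using the observation that the fundamental circuit of $e_1$ with respect to $(B - e_1)\cup f$ is the same as the fundamental circuit of $f$ with respect to $B$.

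For the inductive step $d' \geq 4$, I would fix a shortest path $e_1, f_1, g_1, f_2, \ldots, e_2$ in $S_B$, so $g_1 \in B$, $d_{S_B}(e_1, g_1) = 2$, and $d_{S_B}(g_1, e_2) = d' - 2$. Execute three label swaps in sequence: first swap the labels of $e_1$ and $g_1$ via the base case ($3$ exchanges), then swap the labels of $g_1$ and $e_2$ by invoking the inductive hypothesis (at most $3(d' - 2) - 3$ exchanges), then swap the labels of $e_1$ and $g_1$ once more ($3$ exchanges). Writing the initial labels at $e_1, g_1, e_2$ as $L_1, L_g, L_2$, the triples evolve as $(L_1, L_g, L_2) \to (L_g, L_1, L_2) \to (L_g, L_2, L_1) \to (L_2, L_g, L_1)$, confirming that $e_1$ and $e_2$ have swapped labels and $g_1$ (and every edge not mentioned) is restored to its original label. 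The recurrence $T(d') \leq T(d' - 2) + 6$ with $T(2) = 3$ gives $T(d') \leq 3d' - 3 = O(d)$.

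The only point requiring care is that the recursive swap in the middle must not disturb the label currently residing on $e_1$, which at that moment is $L_g$ rather than $e_1$'s original label. This is exactly what the inductive statement promises: a distance-$(d' - 2)$ swap leaves every label outside its two endpoints unchanged, and so in particular preserves whatever label is sitting on $e_1$. Validity of every individual basis exchange along the way is already packaged into adjacency in $S_B$ via Property~\ref{prop:basis-exchange-advanced}, so no further combinatorial argument is needed beyond this bookkeeping.
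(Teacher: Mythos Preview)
Your proposal is correct and follows essentially the same approach as the paper: induct on the distance in $S_B$, handle the distance-$2$ base case with a $3$-exchange swap via a common non-tree neighbour, and in the inductive step conjugate the recursive swap by the base-case swap, yielding the recurrence $T(d')\le T(d'-2)+6$ with $T(2)=3$ and hence $T(d')\le 3d'-3$. Your write-up is slightly more explicit than the paper's about why each of the three base-case exchanges is valid and about the label bookkeeping, but the argument is the same.
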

\begin{proof}
Let $e$ and $f$ be two edges of $B$, and suppose the shortest path between them in $S_B$ has length $t<d$.
We will prove by induction on $t$ 
that we can swap the labels of $e$ and $f$ with $3t-3$ exchanges. 
Let $e_1$ and $e_2$ be the two vertices that occur immediately after $e$ on a shortest path from $e$ to $f$ in $S_B$. 
Then the cycle formed by adding $e_1$ to $B$ contains both $e$ and $e_2$ by Property~\ref{prop:basis-exchange-advanced}, and by the same property, we can perform the following exchanges:
$e_1$ with $e$, then $e_2$ with $e_1$, then $e$ with $e_2$.  
This exchange sequence returns us to $B$ while swapping the labels of $e$ and $e_2$, and leaving other labels unchanged. 
In the basis case of the induction $t=2$ and $e_2=f$ and this completes the swap with $3$ exchanges.
In the general case, 
the distance between $e_2$ and $f$ in $S_B$ is $t-2$.  By induction, we can swap the labels of $e_2$ and $f$ with $3(t-2)-3$ exchanges.   After that we repeat the first 3 exchanges to complete the swap of the labels of $e$ and $f$.  
All other labels are unchanged.
The total number of exchanges is $6 + 3(t-2) -3 = 3t -3$. 
\end{proof}

Thus it remains to construct a spanning tree $B$ such that the diameter of $S_B$ is $O(\log n)$. 
We will construct our spanning tree by repeatedly contracting cycles.
For a 2-connected graph $G$ with edge set $E(G)$ and for a cycle $C \subseteq E(G)$, 
let $G/C$ denote the graph obtained by contracting $C$.
Note that $E(G/C) = E(G) - E(C)$. Contracting $C$ 
creates  \emph{blocks} that are the maximal subgraphs of $G/C$ that are 2-connected. 
In order to get a bound on the diameter of $S_B$ we will need the following:

\begin{lemma}
\label{lemma-graphic-matroid-induction}
Any 2-connected graph $G$ with $m$ edges contains a cycle $C$ such that all blocks of $G/C$ have at most $m/2$ edges.
\end{lemma}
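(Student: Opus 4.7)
The plan is to produce a ``centrally balanced'' cycle by an extremal argument, after first reformulating the goal in terms of the classical notion of bridges. Recall that a bridge of $C$ is either a chord (a non-$C$ edge with both endpoints on $V(C)$) or a connected component of $G \setminus V(C)$ together with its edges to and attachments on $V(C)$. Because $G$ is 2-connected, the fan lemma (a corollary of Menger's theorem) guarantees two internally disjoint paths from any vertex of $V(G) \setminus V(C)$ to $V(C)$, so every non-trivial block of $G/C$ contains $v_C$; moreover any such block lies entirely within the edge set of a single bridge, since edges in distinct bridges meet in $G/C$ only at $v_C$ and hence cannot coexist in a single 2-connected subgraph. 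It therefore suffices to exhibit a cycle $C$ all of whose bridges have at most $m/2$ edges.

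The next step is to choose $C$ minimizing the potential $\phi(C) := \max\{|E(K)| : K \text{ is a bridge of } C\}$, and to suppose for contradiction that $\phi(C) > m/2$. Let $K$ witness the maximum. Since $G$ is 2-connected, $K$ has at least two attachments on $V(C)$; listing them in cyclic order around $C$, I would choose two consecutive attachments $a$ and $b$, so that the arc $\beta$ of $C$ from $a$ to $b$ has no $K$-attachment in its interior (the other arc $\alpha$ then contains every $K$-attachment). Since $K$ is connected, it contains a simple $a$--$b$ path $P$ whose internal vertices lie off $V(C)$. I then form the new cycle $C' := \alpha \cup P$.

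I would then classify the bridges of $C'$ into three families and bound each strictly below $|E(K)|$: (a) the bridge of $C'$ containing $\beta$, which can absorb only those bridges $K' \ne K$ of $C$ having an attachment in the interior of $\beta$, so its edge-count is at most $|E(\beta)| + (m - |E(C)| - |E(K)|) = m - |E(\alpha)| - |E(K)| < m/2$ because $|E(K)| > m/2$ and $|E(\alpha)| \ge 1$; (b) bridges coming from $E(K) \setminus E(P)$, which cannot merge with the $\beta$-bridge precisely because no $K$-attachment lies in the interior of $\beta$, so each has at most $|E(K)| - |E(P)| \le |E(K)| - 1 < |E(K)|$ edges; and (c) bridges descending from other bridges $K' \ne K$ of $C$ whose attachments all lie on $V(\alpha)$, each of size $|E(K')| < |E(K)|$ strictly, because $K$ and $K'$ are disjoint and $|E(K)| > m/2$ forces $|E(K')| < m/2$. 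This yields $\phi(C') < \phi(C)$, contradicting the minimality of $C$.

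The main obstacle is the case analysis and bridge-bookkeeping in the third step, especially guaranteeing that no bridge of $C'$ simultaneously absorbs a component of $E(K) \setminus E(P)$ together with a piece of $E(\beta)$; this is exactly what the ``consecutive attachments'' choice of $a, b$ is designed to prevent, since it forces every attachment of $K$ onto $V(\alpha) \subseteq V(C')$ and so isolates the $K$-remainder from the interior of $\beta$ in $G \setminus V(C')$.
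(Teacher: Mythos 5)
Your proof is correct and takes essentially the same approach as the paper: take the cycle $C$ minimizing the size of the largest resulting piece, and if some piece $K$ exceeds $m/2$, reroute $C$ through $K$ via two consecutive attachments $a,b$, forming $C' = \alpha \cup P$ where $\alpha$ is the arc of $C$ containing all of $K$'s attachments and $P$ is an $a$--$b$ path inside $K$. The paper phrases the minimization and case analysis in terms of blocks of $G/C$ rather than bridges of $C$, but the construction of $C'$ and the argument that every resulting piece is strictly smaller are the same.
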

\begin{proof}
Let $C$ be the cycle that minimizes the size of the largest block obtained upon contracting it. 
We claim that all blocks of $G/C$ have at most $m/2$ edges.  Suppose not.  Then there exists a block $H$ of size bigger than $m/2$.  We will derive a contradiction.
 
Some edges of $E(H)$ are incident on vertices of $C$ in $G$; let those vertices be $\{v_1\ddd v_k\}$ in clockwise order along $C$. There are two paths between $v_1$ and $v_2$ along the cycle $C$ in $G$, one clockwise and one counterclockwise. Let $P$ be the one that is counterclockwise and thus contains all vertices of $\{v_1\ddd v_k\}$. There also exists a path $P'$ between $v_1$ and $v_2$ that uses only the vertices of $H$. We define $C'$ to be $P\cup P'$. We claim that the size of the largest block of $G/C'$ is smaller than the size of the largest block of $G/C$, hence reaching a contradiction.

First, note that no block of $G/C'$ contains an edge of $H$ and an edge not in $H$.  For consider edges $e$ and $e'$ in $G/C'$ with $e$ in $H$ and $e'$ not in $H$.  Any path from $e$ to $e'$ in $G$ must go through a vertex of $C$, and in particular, must go through a vertex of $P$, since $P$ contains all vertices of $C$ that have an edge of $H$ incident to them.  Because $C'$ contains all of $P$, therefore $e$ and $e'$ are in different blocks of $G/C'$.   

\remove{
First, note that no block of $G/C'$ contains an edge of $H$ and an edge of a block of $G/C$ different from $H$.  
In other words, if  $H'$ is a block of $G/C$ different from $H$, then for all $e\in H$ and $e'\in H'$, if $e$ and $e'$ lie in $G/C'$, then they must be in different blocks of $G/C'$. 
To justify this, note that any path from $e$ to $e'$ in $G$ 
must go through a vertex of $C$, and in particular, must go through a vertex of $P$, since $P$ contains all vertices of $C$ that have an edge of $H$ incident to them.  Because $C'$ contains all of $P$, therefore $e$ and $e'$ are in different blocks of $G/C'$.   
%
}

Now $G/C'$ contains two kinds of blocks: those that contain edges of $H$ and those that do not. Blocks of the first kind must have size at most the size of $H$ from the argument above. In fact, they must be strictly smaller than $H$ since $C'$ contains at least one edge of $H$. Blocks of the second kind must also be smaller than $H$ since at worst, such a block contains all edges of $G$ that are not edges of $H$, and there are at most $m/2$ such edges.
\end{proof}

We are now ready to construct our spanning tree $B$.
\begin{lemma}
\label{lemma-graph-good-tree}
Given a 2-connected graph $G$, there exists a spanning tree $B$ such that the diameter of $S_B$ is $O(\log n)$.
\end{lemma}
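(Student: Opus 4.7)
The plan is to construct $B$ recursively via Lemma~\ref{lemma-graphic-matroid-induction}. Given 2-connected $G$ with $m$ edges, I would find a cycle $C$ such that all blocks of $G/C$ have at most $m/2$ edges, fix an arbitrary edge $e_C \in C$, set $P := C \setminus \{e_C\}$, recursively build a spanning tree $B_i$ of each 2-connected block $H_i$ of $G/C$, and define $B := P \cup \bigcup_i B_i$. A small structural fact to verify first is that every block $H_i$ of $G/C$ contains the contracted vertex $v_C$: any other candidate cut vertex of $G/C$ would lift to a cut vertex of $G$, contradicting 2-connectivity. With this in hand, the recursion is well-defined, and $B$ is a spanning tree of $G$ because $\bigcup_i B_i$ is a spanning tree of $G/C$ and $P$ spans the vertices of $C$.

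Next, I would bound $\mathrm{diam}(S_B)$ by induction on $m$. Let $T(m)$ be the worst-case diameter for this construction; the target recurrence is $T(m) \le T(m/2) + c$ for a constant $c$, solving to $T(m) = O(\log m) = O(\log n)$. For tree edges $e, f \in B$, I split into three cases. If $e, f$ lie in a common $B_i$, the inductive bound gives a path in $S_{B_i}$ of length at most $T(m/2)$, and it lifts to $S_B$: every non-tree edge of $H_i$ is also a non-tree edge of $G$, and its fundamental cycle in $G$ contains the same $B_i$-edges as its fundamental cycle in $H_i$, plus possibly a sub-path of $P$ replacing $v_C$. If both $e, f \in P$, the non-tree edge $e_C$ has fundamental cycle $C$, so the distance is $2$. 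The cross-case, where $e$ and $f$ lie in different pieces (one in $P$ and one in some $B_i$, or in different $B_i, B_j$), demands a constant-length bridge through $P$.

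The cross-case bridging is the main technical challenge. A naive bridge---$e \leadsto r_i \leadsto P \leadsto r_j \leadsto f$---costs $T(m/2)$ on each side and only yields $T(m) \le 2T(m/2) + O(1)$, which solves to a polynomial, not a logarithm. To force $T(m) \le T(m/2) + c$, I would strengthen the inductive hypothesis to a \emph{radius} bound: the recursive construction should come with a designated ``root'' edge $r \in B$ incident to a specified vertex $v^*$ of $G$, such that every edge of $B$ lies within distance $R(m)$ of $r$ in $S_B$, with $R(m) \le R(m/2) + O(1)$. When recursing inside a block $H_i$, the specified vertex is the parent-level contracted vertex $v_C$, which is guaranteed to lie in $H_i$ by the structural fact above, so a suitable root always exists. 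A constant-length bridge from a block's root $r_i$ to some edge of the parent path $P$ is obtained from a non-tree edge of $H_i$ whose fundamental cycle in $H_i$ passes through $v_C$---such an edge exists by 2-connectedness of $H_i$---because in $G$ its fundamental cycle then contains both $B_i$-edges and a sub-path of $P$, which can in turn reach any other edge of $P$ via $e_C$ in two more hops. Composing the radius bound from the inductive hypothesis with this $O(1)$ per-level bridge yields $R(m) = O(\log n)$, and hence $T(m) \le 2R(m) = O(\log n)$, completing the proof.
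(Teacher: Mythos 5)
Your high-level strategy (recursive construction via Lemma~\ref{lemma-graphic-matroid-induction}, strengthening the induction to a radius bound so the recurrence becomes $R(m)\le R(m/2)+O(1)$ rather than $T(m)\le 2T(m/2)+O(1)$) matches the paper in spirit, and the ``structural fact'' that every block of $G/C$ contains $v_C$ is correct and is also implicit in the paper. However, your \emph{construction} of $B$ is not the paper's, and the bridging step---which you correctly identify as ``the main technical challenge''---has a genuine gap in your argument.

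The gap is in the claim: ``A constant-length bridge from $r_i$ to some edge of $P$ is obtained from a non-tree edge of $H_i$ whose fundamental cycle in $H_i$ passes through $v_C$ \ldots because in $G$ its fundamental cycle then contains both $B_i$-edges and a sub-path of $P$.'' Two things go wrong. First, a non-tree edge $g$ of $H_i$ with $v_C$ on its fundamental cycle need not have $r_i$ on that cycle, so the hop $r_i\sim g$ in $S_B$ is not justified. Second, and more seriously, even if $F_g$ passes through $v_C$ in $H_i$, its lift to $G$ need not contain any edge of $P$: the two edges of $F_g$ incident to $v_C$ may lift to edges incident to the \emph{same} vertex of $C$ in $G$, in which case $F_g$ lifts to a genuine cycle of $G$ that touches $C$ at a single vertex and uses no $P$-edges. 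Nothing in 2-connectivity of $H_i$ rules this out, and nothing in your choice of $r_i$ rules it out either. As a result, the distance from $r_i$ to $P$ in $S_B$ can grow with the recursion, which destroys the recurrence $R(m)\le R(m/2)+O(1)$.

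This is exactly where the paper deviates from your construction: the paper's spanning tree is \emph{not} simply $P\cup\bigcup_i B_i$. At each level, after choosing the cycle $C^i$ in a block $H^i$, the paper applies Menger's theorem inside $H^{i-1}$ to find two internally vertex-disjoint paths joining $B^{i-1}$ to $C^i$, and adds a minimal such edge set $P$ minus one edge $f$ to the tree. The point is that $f$'s fundamental circuit $D$ with respect to the final $B$ is guaranteed to contain all of the Menger path edges, at least one edge of $C^i$, and at least one edge of $B^{i-1}$; this is what certifies the $O(1)$-length hop from the current level to the previous one in $S_B$. Without adding these extra ``linking'' edges to $B$, the guarantee you are claiming simply is not available. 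If you want to repair your proof, you will need to augment your tree with such linking edges (or prove a nontrivial lemma asserting that a suitable $g$ with $r_i\in F_g$ and a non-degenerate lift always exists, which your 2-connectivity appeal does not establish).
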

\begin{proof}
The algorithm for constructing $B$ proceeds in iterations $i=1, 2, \ldots$.  In iteration $i$ we will add a set $B^i$ to $B$.   In the first iteration we find the cycle $C$ of Lemma~\ref{lemma-graphic-matroid-induction} such that all blocks of $G/C$ are of size at most $m/2$. 
Let  $B^1$ be all edges but one of $C$, and contract 
those edges (equivalently, contract $C$)  
thus breaking the graph $G$ into several blocks. In general, in any iteration, we start with the collection of blocks produced in the previous iteration,
contract a connected set of edges inside each block, and add 
those edges to $B^i$. After each contraction we eliminate loops and parallel edges.  Each iteration reduces the number of vertices of $G$ by contracting a set of edges. The algorithm terminates once the graph $G$ is left with just one vertex.

We now describe what happens to one of the blocks that is dealt with in iteration $i$.  
For $i \ge 2$, let $H^{i-1}$ be a block at the beginning of iteration ${i-1}$, and let $B^{i-1}$ be the edges of $H^{i-1}$ that we contract and add to $B$ in the $(i-1)^{\rm st}$ iteration.  Let $b$ be the vertex that $B^{i-1}$ gets contracted to, and let $H^{i}$ be one of the blocks formed by the contraction. 
Then in iteration $i$, 
we pick 
the edges of $H^{i}$ to be contracted and added to $B^i$, as follows. 
Let $C^{i}$ be the cycle of Lemma~\ref{lemma-graphic-matroid-induction} for $H^{i}$.  Observe that $C^i$ may or may not include vertex $b$.  Let $e$ be an edge of $C^i$.  As in the first iteration, we will add to $B^i$ all edges of $C^i$ except $e$. However, we may need to add more edges in order to maintain connectivity of $B$, and to ensure that $S_B$ has small diameter. 

In $H^{i-1}$ the sets of edges $B^{i-1}$ and $C^i$ are disjoint.  
We will use the fact that $H^{i-1}$ is 2-connected and apply Menger's Theorem, see~\cite[chapter 9]{Schrijver}, to find a set of edges that connect $B^{i-1}$ and $C^i$.
More precisely, let $s$ be a new vertex adjacent to all endpoints of edges in $B^{i-1}$ and $t$ be a new vertex adjacent to all endpoints of edge in $C^i$. 
Because $H^{i-1}$ is 2-connected, we must remove at least 2 vertices to disconnect $s$ and $t$. 
Then, by Menger's Theorem, there are two internally vertex-disjoint paths from $s$ to $t$.  
Let $P$ be a minimal set of edges of $H^{i-1}$  that form two such paths.  Note that if $C^i$ includes vertex $b$, one or both of the paths will have no edges of $H^{i-1}$.  See Figure~\ref{fig:graph-basis}.

\begin{figure}
\centering
\includegraphics[width=6in]{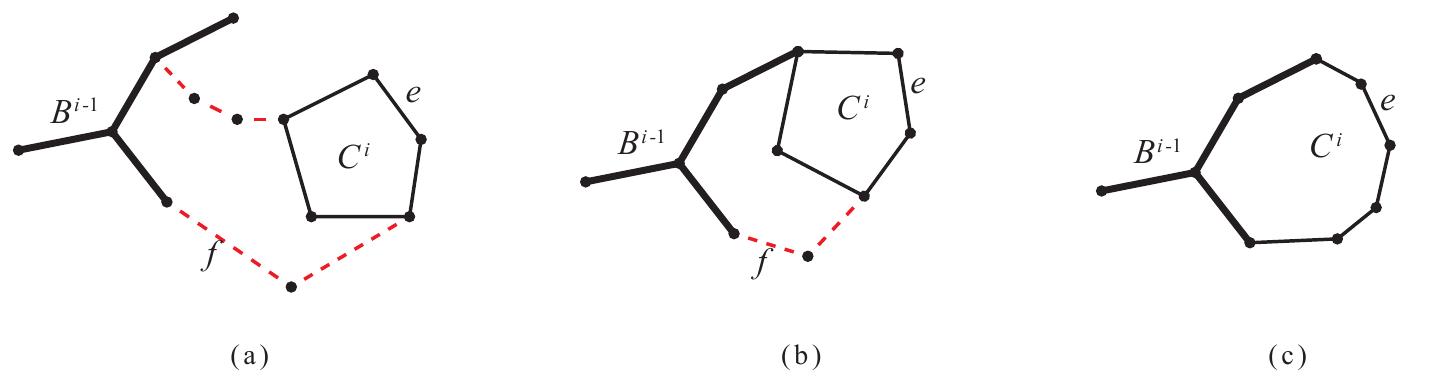}
\caption{Adding a minimal set of edges $P$ (in dashed red) to create a 2-connection between $B^{i-1}$ (thick edges) and $C^i$ (thin edges). $P$ may consist of two, one, or zero disjoint paths as shown in (a), (b), or (c), respectively.}
\label{fig:graph-basis}
\end{figure}

Observe that the two paths of $P$ go from two distinct vertices that are joined by a path in $B^{i-1}$ to two distinct vertices that are joined by a path in $C^i - \{e\}$.  Thus, a cycle, $D$, is formed by $P$ together with a non-empty subset of $B^{i-1}$ and a non-empty subset of $C^i - \{e\}$.  
Let $f$ be an edge of $P$ (if $P$ is non-empty).  By minimality of $P$, there is no cycle in $P - \{f\}$. Add to  $B^i$ the set $(C^i - \{e\}) \cup (P - \{f\})$.  This completes the description of how we handle one block in iteration $i$.   We handle other blocks the same way, adding further edges to $B^i$ for each other block.  
This completes the description of the algorithm.

To establish the correctness of the algorithm, first note that after each iteration $B = \cup_{j=1}^i B^j$ is connected and contains no cycle.  
Thus, when the algorithm terminates, $B$ spans all vertices of $G$ and contains no cycle.
It remains to prove that the diameter of $S_B$ is $O(\log n)$.   
In each iteration of the algorithm, we reduce the size of each block by at least half, and thus the 
algorithm terminates in at most $O(\log n)$ iterations. 
To complete the proof we will show that for each $i$, every edge in $B^i$ has a path of length $O(1)$ in $S_B$ to some edge in $B^{i-1}$.

Referring to the step of the construction described above, note that the fundamental circuit of $e$ in $B^i$ contains all of $C^i$.  Thus $e$ is joined by an edge of $S_B$ to every edge of $C^i$.  If $P$ is empty, then the fundamental circuit of $e$ also includes an edge of $B^{i-1}$ and we are done.  Otherwise,  the fundamental circuit of $f$ in $B$ contains all of $D$, which includes all of $P$ together with at least one edge of $C^i$ and at least one edge of $B^{i-1}$.  Thus $f$ is joined by an edge of $S_B$ to every edge of $P$, and to at least one edge of $C^i$ and to at least one edge of $B^{i-1}$.
Therefore, in $S_B$ every edge of $B^i$ is within distance 4 of some edge of $B^{i-1}$.
\end{proof}


Lemmas~\ref{lemma:swap}, \ref{lemma-graphic-matroid-induction}, and \ref{lemma-graph-good-tree} give us the following strengthened form of Theorem~\ref{theorem-spanning-tree-characterization}.
\begin{theorem}
If $\T_1$ and $\T_2$ are two labelled spanning trees of an $n$-vertex graph $G$ and for each label $i$, the edge with that label in $\T_1$ and the edge with that label in $\T_2$ lie in the same 2-connected component of $G$ then
we can reconfigure $\T_1$ to $\T_2$ using  
$O(n\log n)$ basis exchange steps.
\label{thm:graphic-bound}
\end{theorem}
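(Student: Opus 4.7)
The plan is to follow exactly the skeleton used in the proof of Theorem \ref{theorem-spanning-tree-characterization}, but replace the ``arbitrary spanning tree'' meeting point with the carefully constructed spanning tree guaranteed by Lemma \ref{lemma-graph-good-tree}. First I would reduce to the case where $G$ is 2-connected: by hypothesis the labels respect the 2-connected component partition, so the problem decomposes into independent subproblems, one per 2-connected component. If the $j$-th component has $n_j$ vertices, then $\sum_j n_j \le 2n$ and $\sum_j n_j \log n_j \le (\sum_j n_j) \log n = O(n \log n)$, so a per-component bound of $O(n_j \log n_j)$ yields the claimed total.

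For the 2-connected case, I would invoke Lemma \ref{lemma-graph-good-tree} to obtain a spanning tree $B$ of $G$ with fundamental graph diameter $d = O(\log n)$. Ignoring labels, reconfigure $\T_1$ to $B$ via a sequence $\sigma_1$ of at most $n-1$ basis exchanges (Property \ref{prop:basis-exchange}), and similarly obtain $\sigma_2$ reconfiguring $\T_2$ to $B$. After applying $\sigma_1$, the edges of $B$ carry some labelling $l_1'$; the target labelling of $B$ (the one we would obtain by applying $\sigma_2$ to $\T_2$) is some other labelling $l_2'$.

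To reconcile the two labellings on $B$, I would iterate over labels as in Theorem \ref{theorem-spanning-tree-characterization}: pick any label $i$ such that $l_1'^{-1}(i) \ne l_2'^{-1}(i)$, and apply Lemma \ref{lemma:swap} to swap the labels on those two edges in $O(d) = O(\log n)$ exchanges; this fixes label $i$ and leaves all other labels unchanged, so after at most $n-1$ iterations we reach labelling $l_2'$. This phase costs $O(n \log n)$ exchanges in total. Finally I would apply the reverse of $\sigma_2$ to arrive at $\T_2$, at a cost of another $O(n)$ exchanges. The overall count is $O(n) + O(n \log n) + O(n) = O(n \log n)$.

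The only points that require any care are the two sanity checks: that every invocation of Lemma \ref{lemma:swap} is legal, and that the reduction to 2-connected components is clean. For the first, since we are working inside a 2-connected graph $G$, any two edges of $B$ lie in a common cycle and hence $S_B$ is connected, so Lemma \ref{lemma:swap} applies to every pair; for the second, the restriction of a spanning tree of $G$ to a 2-connected component $K_j$ is a spanning tree of $K_j$, so the per-component algorithm is well defined and its outputs can be concatenated across components. I do not expect any genuine obstacle here — the whole theorem is essentially a bookkeeping assembly of Lemmas \ref{lemma:swap} and \ref{lemma-graph-good-tree} on top of the strategy already used in Theorem \ref{theorem-spanning-tree-characterization}.
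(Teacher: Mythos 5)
Your proposal is correct and is exactly the assembly the paper intends: reduce to 2-connected blocks, build the low-diameter spanning tree $B$ from Lemma~\ref{lemma-graph-good-tree}, move both labelled trees to $B$ unlabelled in $O(n)$ exchanges, then fix labels one at a time via Lemma~\ref{lemma:swap} at $O(\log n)$ per swap. The paper states Theorem~\ref{thm:graphic-bound} as an immediate consequence of Lemmas~\ref{lemma:swap}, \ref{lemma-graphic-matroid-induction}, and \ref{lemma-graph-good-tree} without writing out this bookkeeping, so there is no divergence to report.
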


\section{Reconfiguring ordered bases of a general matroid}
\label{sec:general-matroids}

In this section we turn to general matroids.  
We generalize the result of the previous section that 
characterizes when one labelled basis of a rank $r$ matroid can be reconfigured to another labelled basis of the matroid and prove a bound of $O(r^2)$ exchange steps for the reconfiguration.  
In the second subsection we improve this bound to $O(r^{1.5})$, a weaker bound than was possible for graphic matroids. 

\subsection{Connectivity}

Our goal is to follow the proof of Theorem~\ref{theorem-spanning-tree-characterization}, which used edge contraction, cycles, and 2-connectivity in a graph.
Observe that contraction of edges in a graph corresponds to contraction of elements in a matroid, cycles in a graph correspond to circuits in a matroid, and 2-connectivity in a graph corresponds to connectivity in a matroid (every pair of elements is contained in some circuit). 

With these correspondences,  
it is easy to check that every step of the proof of Theorem~\ref{theorem-spanning-tree-characterization} goes through for matroids and thus we get the following theorem, which we state without proof.

\begin{theorem}
\label{theorem-matroid-characterization}
Given two labelled bases $\T_1$ and $\T_2$ of a rank $r$ matroid $M$, we can reconfigure one to the other if and only if for each label $i$, the element with that label in $\T_1$ and the element with that label in $\T_2$ lie in the same block of $M$. 
Moreover, when reconfiguration is possible, it can be accomplished with $O(r^2)$ basis exchange steps. 
\end{theorem}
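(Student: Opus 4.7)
The plan is to transfer the proof of Theorem~\ref{theorem-spanning-tree-characterization} to matroids under the dictionary edge $\leftrightarrow$ element, cycle $\leftrightarrow$ circuit, and 2-connected component $\leftrightarrow$ connected component. Every graph-theoretic fact used in the earlier proof has a direct matroidal counterpart: Property~\ref{prop:basis-exchange} supplies the reduction between bases, Property~\ref{prop:basis-exchange-advanced} pinpoints the permitted exchanges via fundamental circuits, and the definition of connected component already asserts that two elements of a common component share a circuit.

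For the `only if' direction, I would observe that a single basis exchange that replaces $x$ by $y$ places $x$ in the fundamental circuit of $y$ by Property~\ref{prop:basis-exchange-advanced}, so $x$ and $y$ share a circuit and hence lie in the same connected component. Following each label along an exchange sequence and using transitivity then forces the initial and final elements carrying that label into a common component.

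For the `if' direction I would first reduce to label rearrangement inside a single fixed basis: pick any basis $B$ and reconfigure both $\T_1$ and $\T_2$ to $B$ while ignoring labels at cost $O(r)$ each by Property~\ref{prop:basis-exchange}, noting that the component condition is preserved by the same reasoning used in `only if'. What remains is a swap lemma: for any two elements $e_1, e_2 \in B$ lying in a common connected component of $M$, there is an exchange sequence of length $O(r)$ that returns to $B$ while swapping the labels of $e_1$ and $e_2$ and leaving all other labels fixed. Applying this lemma once per label then yields the claimed $O(r^2)$ bound.

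To prove the swap lemma, take a circuit $C$ containing $e_1$ and $e_2$ and induct on $t = |C \setminus B|$. For $t = 1$, the unique element $f \in C \setminus B$ has fundamental circuit equal to $C$, so three exchanges $e_1 \leftrightarrow f$, $e_2 \leftrightarrow e_1$, $f \leftrightarrow e_2$ return to $B$ with the desired swap. For $t > 1$, pick any $f \in C \setminus B$ and consider its fundamental circuit $C'$ with respect to $B$. The key matroidal observation, replacing the graph-cycle argument from Theorem~\ref{theorem-spanning-tree-characterization}, is that $C' \ne C$ (because $t > 1$ forces $C \not\subseteq B \cup \{f\}$) and no circuit is a proper subset of another, so $C' \not\subseteq C$; together with $C' \subseteq B \cup \{f\}$ this produces some $f' \in (C' \cap B) \setminus C$. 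Exchanging $f$ with $f'$ gives a basis $B'$ still containing $e_1, e_2$ with $|C \setminus B'| = t - 1$; the induction, sandwiched between two $f \leftrightarrow f'$ exchanges, completes the swap in $O(t) = O(r)$ exchanges. The only delicate point is this existence argument for $f'$; once it is established the $O(r^2)$ upper bound follows immediately.
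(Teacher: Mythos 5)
Your proof is correct and carries out exactly the dictionary translation the paper itself invokes (the paper states Theorem~\ref{theorem-matroid-characterization} without proof, asserting that the argument for Theorem~\ref{theorem-spanning-tree-characterization} transfers verbatim under edge $\leftrightarrow$ element, cycle $\leftrightarrow$ circuit, 2-connected component $\leftrightarrow$ connected component). Your explicit justification that some $f' \in (C' \cap B)\setminus C$ exists --- via $C' \neq C$, $C' \subseteq B \cup \{f\}$, and the minimality of circuits --- is precisely the step the paper leaves implicit in both the graphic and the general setting, so you have closed the only nontrivial gap in the translation.
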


\subsection{A tighter bound for general matroids}
\label{section:matroid-tight-bound}

In order to tighten the $O(r^2)$ bound on the number of exchanges needed for labelled basis reconfiguration in general matroids, we would like to follow the approach we used for graphic matroids in Section~\ref{section:graphic-tight-bound}.  
Lemma~\ref{lemma:swap} carries over directly so it suffices to build a basis $B$ whose fundamental graph $S_B$ has small diameter. 
For graphic matroids of rank $r$, we achieved diameter $O(\log r)$ but for general matroids we will only achieve a weaker bound of diameter $O(\sqrt r)$. 
Our starting point for graphic matroids was  
Lemma~\ref{lemma-graphic-matroid-induction} which proved that there is a cycle whose contraction cuts the size of a block in half.  For general matroids the analogous result is conjectured to be true, but we must rely on the following weaker result,  attributed to Seymour, and with an explicit proof in~\cite[Corollary 1.4]{MRWW05}.


\begin{lemma}
\label{lemma-matroid-induction}
Let $C$ be the biggest circuit of a connected matroid $M$. Then the biggest circuit of $M/C$ is strictly smaller than $C$.
\end{lemma}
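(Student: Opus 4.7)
I would proceed by contradiction: assume $M/C$ contains a circuit $D$ with $|D|\ge|C|=:k$, and produce a circuit of $M$ strictly larger than $C$.

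\textbf{Reducing to disjoint circuits of $M$.} By the definition of circuit in a contraction, there is a circuit $D^{*}$ of $M$ with $D\subseteq D^{*}\subseteq D\cup C$. If $D^{*}\supsetneq D$ then $|D^{*}|>|D|\ge k$, which by itself contradicts the maximality of $|C|$; so $D^{*}=D$, meaning $D$ is itself a circuit of $M$, disjoint from $C$, with $|D|=k$.

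\textbf{Structure of $M|(C\cup D)$.} Since $D$ is a circuit of $M/C$, Property~\ref{prop:rank-after-contraction} gives
$$r_M(C\cup D)=r_M(C)+r_{M/C}(D)=(k-1)+(k-1)=2k-2=r_M(C)+r_M(D).$$
Combined with $C\cap D=\emptyset$, this forces $M|(C\cup D)$ to decompose as the direct sum $M|C\oplus M|D$. Consequently the only circuits of $M|(C\cup D)$ are $C$ and $D$ themselves, so no circuit of $M|(C\cup D)$ meets both $C$ and $D$ and the restriction is disconnected in the matroid sense.

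\textbf{Contradicting connectivity of $M$.} Because $M$ is connected, for any $c\in C$ and $d\in D$ there is a circuit $E$ of $M$ containing both; by the previous step $E$ cannot lie entirely inside $C\cup D$, so $E\setminus(C\cup D)\ne\emptyset$. I would now use $C$, $D$, and $E$ together to exhibit a circuit of $M$ of size strictly greater than $k$, contradicting maximality. The natural tool is Tutte's linking theorem applied to the disjoint sets $C$ and $D$: the local link $r_M(C)+r_M(D)-r_M(C\cup D)$ equals $0$, whereas connectivity of $M$ forces this quantity to become positive in some minor of $M$ obtained by contracting a subset of $E\setminus(C\cup D)$; translating that minor back into $M$ yields a circuit that strictly properly extends $C$ (or $D$).

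The main obstacle is this last extraction. Strong circuit elimination (the strengthening of Property~\ref{prop:basis-exchange-advanced}) applied naively to $C$, $D$, $E$ only yields upper bounds on the sizes of the circuits it produces, so some sharper input---most naturally the quantitative strength of Tutte's linking theorem, or equivalently a careful rank computation on $M|(C\cup D\cup\{z\})$ for a well-chosen $z\in E\setminus(C\cup D)$---is required to force the newly constructed circuit to have more than $k$ elements.
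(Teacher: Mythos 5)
Your first two reductions are correct and useful: if $M/C$ had a circuit $D$ with $|D| \ge |C| = k$, then $D$ would already have to be a circuit of $M$ itself, disjoint from $C$, with $|D| = k$; and the rank computation $r_M(C\cup D) = (k-1)+(k-1) = r_M(C)+r_M(D)$ shows $C$ and $D$ are skew, so $M|(C\cup D) = M|C \oplus M|D$. The observation that connectivity of $M$ produces a circuit meeting both $C$ and $D$, which must therefore leave $C \cup D$, is also right.

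However, you explicitly acknowledge that you cannot carry out the final extraction, and that gap is genuine --- it is precisely the hard content of the lemma. Passing from ``two disjoint, skew, equal-size circuits in a connected matroid'' to ``a circuit strictly larger than both'' is not something strong circuit elimination or a bare application of Tutte's linking theorem will give you: both produce new circuits, but neither comes with a usable lower bound on their size, and your sketch of contracting a piece of a linking circuit and ``translating back'' does not close that quantitative gap. Note also that the paper itself does not prove this lemma at all; it invokes it as a known result, attributed to Seymour, with an explicit proof cited at \cite[Corollary 1.4]{MRWW05}, and that proof is considerably more involved than the matroid tools used elsewhere in the paper. So you should not expect to rederive it in a few lines here. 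A small side note: you describe strong circuit elimination as ``the strengthening of Property~\ref{prop:basis-exchange-advanced},'' but that property concerns fundamental circuits and basis exchange, not circuit elimination.
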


Using this lemma we can follow the approach we used for graphic matroids to prove the following bound.

\begin{theorem}
\label{theorem-matroid-bound}
If $\T_1$ and $\T_2$ are two labelled bases of a rank $r$ matroid  $M$ and for each label $i$, the element with that label in $\T_1$ and the element with that label in $\T_2$ lie in the same connected component of $M$ then
we can reconfigure $\T_1$ to $\T_2$ using  
$O(r^{1.5})$ basis exchange steps.
%
\end{theorem}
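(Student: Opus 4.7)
The plan is to mimic the strategy of Theorem~\ref{thm:graphic-bound}: construct a basis $B$ of $M$ whose fundamental graph $S_B$ has small diameter, and then apply Lemma~\ref{lemma:swap}. Lemma~\ref{lemma:swap} carries over verbatim to arbitrary matroids, since its proof uses only Property~\ref{prop:basis-exchange-advanced} (replace ``cycle'' by ``fundamental circuit'' throughout). By Theorem~\ref{theorem-matroid-characterization} we may handle each connected component of $M$ independently; the components have ranks summing to $r$, so it suffices to show that within each component of rank $r^\ast$ one can build $B$ with $S_B$-diameter $O(\sqrt{r^\ast})$. Combined with $O(\sqrt{r^\ast})$ exchanges per label-swap from Lemma~\ref{lemma:swap}, and at most $r^\ast$ labels to correct per component, the totals over components add up to $O(r^{1.5})$.

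The construction of $B$ follows Lemma~\ref{lemma-graph-good-tree} almost verbatim, with Lemma~\ref{lemma-matroid-induction} in place of Lemma~\ref{lemma-graphic-matroid-induction} and Tutte's linking theorem in place of Menger's theorem. In iteration $i$, working in each connected component $H^{i-1}$ of the currently contracted matroid, I choose the largest circuit $C^i$ of $H^{i-1}$ and add all but one element $e \in C^i$ to $B^i$. To preserve connectivity of $B$ and ensure the fundamental-circuit argument goes through, I use Tutte's linking theorem to extract a minimal set $P^i$ of elements of $H^{i-1}$ that realises matroid-connectivity at least $2$ between $B^{i-1}$ and $C^i$; this $P^i$ is the matroid analogue of the two internally disjoint paths obtained from Menger's theorem in the graphic case (and may be empty if $C^i$ already meets $B^{i-1}$ with sufficient connectivity). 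I then add $P^i$ minus one element $f$ to $B^i$ and contract $B^i$ before the next iteration. Property~\ref{prop:incremental-independent} guarantees that the accumulated $B = \bigcup_i B^i$ stays independent and thus ends as a basis. The fundamental-circuit analysis at the end of Lemma~\ref{lemma-graph-good-tree} transfers unchanged: the fundamental circuit of $e$ in $B$ contains all of $C^i$ (and, when $P^i$ is empty, an element of $B^{i-1}$), while the fundamental circuit of $f$ contains all of $P^i$ together with at least one element of $C^i$ and at least one of $B^{i-1}$. Hence every element of $B^i$ is within distance $4$ in $S_B$ of some element of $B^{i-1}$.

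The main obstacle, and the reason we get $O(\sqrt{r})$ rather than $O(\log r)$, is bounding the number $t$ of iterations. Unlike the graphic setting, Lemma~\ref{lemma-matroid-induction} only says the largest circuit strictly shrinks under contraction, not that it halves. Let $k_i$ be the size of the largest circuit $C^i$ used in iteration $i$ along a fixed nested chain of components. Strict monotonicity gives $k_1 > k_2 > \cdots > k_t \ge 2$, so $k_i \ge t - i + 2$. Each iteration contributes at least $k_i - 1$ independent elements to $B$, hence
$$r^\ast \;\ge\; \sum_{i=1}^{t}(k_i - 1) \;\ge\; \sum_{i=1}^{t}(t - i + 1) \;=\; \frac{t(t+1)}{2},$$
yielding $t = O(\sqrt{r^\ast})$. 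Combined with the diameter contribution of $4$ per level established above, $S_B$ has diameter $O(\sqrt{r^\ast}) = O(\sqrt{r})$ within each component. Feeding this into the matroid version of Lemma~\ref{lemma:swap}, and repeating for each label as in the proof of Theorem~\ref{theorem-matroid-characterization}, we obtain the stated bound of $O(r^{1.5})$ basis exchanges.
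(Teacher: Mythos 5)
Your proposal is correct and follows the paper's construction essentially verbatim: build $B$ by iteratively contracting the biggest circuit $C^i$ in each component (invoking Lemma~\ref{lemma-matroid-induction}) together with a minimal connector $P$ extracted via Tutte's linking theorem, argue that each element of $B^i$ is within distance $4$ in $S_B$ of some element of $B^{i-1}$, and feed the resulting $O(\sqrt{r})$ diameter into Lemma~\ref{lemma:swap}. The one place you deviate is the iteration count, where you sum the strictly decreasing circuit sizes $k_1 > \cdots > k_t$ along a nested chain to obtain $r^\ast \ge \sum(k_i-1) \ge t(t+1)/2$; the paper instead uses a two-phase count (at most $O(\sqrt r)$ iterations while some circuit has size $\Omega(\sqrt r)$, then at most $O(\sqrt r)$ more by strict decrease), but both arguments are equivalent and give the same $O(\sqrt{r})$ bound.
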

\begin{proof}
Following the idea of the proof of Theorem~\ref{thm:graphic-bound}, 
we will prove the bound by 
showing that every connected component of the matroid has 
a basis $B$ whose fundamental graph $S_B$ has diameter $O(\sqrt{r})$. We do this using almost exactly the  algorithm of Lemma~\ref{lemma-graph-good-tree} with one difference: instead of picking the cycle of Lemma~\ref{lemma-graphic-matroid-induction} in each block in each iteration, we will pick the biggest circuit and use Lemma~\ref{lemma-matroid-induction}.

As before, the algorithm for constructing $B$ proceeds in iterations $i=1, 2, \ldots$.  In iteration $i$ we will add a set $B^i$ to $B$.   In the first iteration we find the the largest circuit $C$ in $M$ and let $B^1$ be all elements but one of $C$.  We contract 
those elements,  
thus breaking the matroid into several connected components. In general, in any iteration $i$, we start with the collection of connected components produced in the previous iteration,
contract some elements inside each component, and add 
those elements to $B^i$. After each contraction we eliminate loops and parallel elements.  
Each iteration reduces the number of elements of $M$ and the algorithm terminates when no elements are left.

We now describe what happens to one of the connected components that is dealt with in iteration $i$. 
For $i \ge 2$, let $H^{i-1}$ be a connected component at the beginning of iteration ${i-1}$, and let $B^{i-1}$ be the elements  of $H^{i-1}$ that we contract and add to $B$ in the $(i-1)^{\rm st}$ iteration.  
Let  $H^{i}$ be one of the connected components formed by the contraction. 
Then in iteration $i$, 
we pick 
the elements of $H^{i}$ to be contracted and added to $B^i$, as follows. 

Let $C^i$ be the biggest circuit of $H^i$ and let $e$ be an element of $C^i$.
As before, we will put $C^i - \{e\}$ into $B^i$, but, as before, we may need to add elements to connect the independent set $B^{i-1}$ with the circuit $C^i$. 
We will be working in the matroid $H^{i-1}$ which we abbreviate as $H$.
 
Since $C^i$ is a circuit in $H^{i}$, we have $ |C^i| -1 = r_{H^i}(C^i) $.
Now, $H^i$ is a connected component of $H/ B^{i-1}$, so by Property~\ref{prop:rank-after-contraction},  
$r_{H^i}(C^i) = r_H(C^i \cup B^{i-1}) - r_H(B^{i-1})$.  
Thus $r_H(C^i \cup B^{i-1}) = |B^{i-1}|  + |C^i| - 1$, which means that $C^i \cup B^{i-1}$ has co-rank 1 and contains a unique circuit.

If $C^i$ is independent in $H$ then $C^i \cup B^{i-1}$ has a circuit formed by the elements of $C^i$ together with at least one element of $B^{i-1}$.  
In this case we add to $B^i$ the set $C^i - \{e\}$.  Observe that this set is independent in $H$
and that the fundamental circuit of $e$ contains all elements of $C^{i-1}$ and at least one element of $B^{i-1}$.   In the graphic case, this corresponds to Figure~\ref{fig:graph-basis} (right).
We will prove below that $B^i$ has the properties we need. 

Otherwise, $C^i$ is not independent in $H$.  In this case, the circuit in $C^i \cup B^{i-1}$ is $C^i$, and 
we will need to add more elements to $B^i$ in order to connect $B^{i-1}$ with $C^i$.
We will use the matroid analogue of Menger's Theorem which is known as Tutte's Linking Theorem~\cite{Tutte-linking}.
This theorem applies to two disjoint sets of elements in a matroid.  In our case the matroid is $H$, 
and the disjoint sets are $B^{i-1}$ and $C^i - \{e\}$, both of which are independent in $H$.
To ease notation, let $A = C^i - \{e\}$.  

The analogue of a separating set of vertices is $\kappa_H ( B^{i-1}, A)$, defined as the minimum, over sets $X$ that contain $B^{i-1}$ and exclude $A$, of $r_H(X) + r_H(E - X) - r_H(E)$.
Since $H$ is connected, this minimum is at least 1.  In notation, we have:

$$\kappa_H ( B^{i-1}, A) = \min_{B^{i-1} \subseteq X \subseteq E \setminus A}  r_H(X) + r_H(E - X) - r_H(E) \ge 1.$$

The analogue of vertex-disjoint paths in Menger's theorem is 
$\sqcap_{H/P} (B^{i-1}, A)$, defined as the maximum over sets $P$, of 
$r_{H/P} (B^{i-1}) + r_{H/P} (A) - r_{H/P} (B^{i-1} \cup A)$.

According to the version of Tutte's Linking Theorem stated as equation (8.16) in Oxley~\cite{Oxley-2011}, 
there exists a set of elements $P$ such that 

\begin{equation}
\label{eq:tutte}
\sqcap_{H/P} (B^{i-1}, A) = \kappa_{H} ( B^{i-1}, A). 
\end{equation}

\remove{
In this equation the right-hand side is the analogue of vertex separation in Menger's theorem.  
In particular, $\kappa_H ( B^{i-1}, A)$ is defined (see Section~\ref{sec:preliminaries})  as the minimum, over sets $X$ that contain $B^{i-1}$ and exclude $A$, of $r_H(X) + r_H(E - X) - r_H(E)$, where $E$ is the set of elements of $M$.  Since $H$ is connected, this minimum is at least 1.  In notation, we have:

$$\kappa_H ( B^{i-1}, A) = \min_{B^{i-1} \subseteq X \subseteq E \setminus A}  r_H(X) + r_H(E - X) - r_H(E) \ge 1$$

The left-hand side of equation~(\ref{eq:tutte}) is the analogue of vertex-disjoint paths in Menger's theorem.  
$\sqcap_{H/P} (B^{i-1}, A)$ is defined  (see Section~\ref{sec:preliminaries}) as 
$r_{H/P} (B^{i-1}) + r_{H/P} (A) - r_{H/P} (B^{i-1} \cup A)$.
}

As in the proof of the generalization of Tutte's Linking Theorem due to Geelen, Gerards, and Whittle~\cite[proof of Theorem 8.5.7]{Oxley-2011},
we will choose $P$ to be a minimal  set such that equation~(\ref{eq:tutte}) holds.
As shown in their proof, such a minimal $P$ is independent, and is \emph{skew} to $B^{i-1}$ and $A$.  Two sets are \emph{skew} if the rank of their union is the sum of their ranks.
In our case, $B^{i-1}$ and $A$ are  independent, so skewness implies that $B^{i-1} \cup P$ and $A \cup P$ are independent.
  
Applying Property~\ref{prop:rank-after-contraction} to $\sqcap_{H/P} (B^{i-1}, A)$ yields

\begin{equation}
\label{eq:P}
\sqcap_{H/P} (B^{i-1}, A) = r_H(B^{i-1} \cup P) + r_H(A \cup P) - r_H(P) - r_H(B^{i-1} \cup A \cup P).
\end{equation}
When $P$ is independent and skew to $B^{i-1}$ and $A$ this becomes:
\begin{equation}
\label{eq:P2}
\sqcap_{H/P} (B^{i-1}, A) = |B^{i-1}| + |A| + |P| - r_H(B^{i-1} \cup A \cup P).
\end{equation}  
From this, it is clear that if the value of equation~(\ref{eq:tutte}) is greater than 1, then we can delete elements of $P$ to obtain a minimal $P$ with $\sqcap_{H/P} (B^{i-1}, A) = 1$.  

For the remainder of the proof we 
define $P$ to be a minimal set with $\sqcap_{H/P} (B^{i-1}, A) = 1$.  
From equation (\ref{eq:P2}) we know that $B^{i-1} \cup A \cup P$ has co-rank 1.  
There must be a unique circuit $D$ in $B^{i-1} \cup A \cup P$ and $D$ must contain all elements of $P$ (by minimality of $P$) and at least one element of $B^{i-1}$ (since $A \cup P$ is independent) and at least one element of $A$ (since $B^{i-1} \cup P$ is independent).

Let $f$ be an element of $P$ and add to $B^i$ the set $(C^i - \{e\}) \cup (P - \{f\})$.  Observe that this set is independent in $H$.  The fundamental circuit of $e$ contains all the elements of $C^i$, and the fundamental circuit of $f$ contains all the elements of $P$ and at least one element of $B^{i-1}$ and at least one element of $C^i$.

Before proceeding with the proof we will mention why the above analysis was separated into  two cases depending on whether or not $C^i$ is independent in $H$. 
Observe that if $C^i$ is independent in $H$, then the minimal set $P$ that connects $B^{i-1}$ and $C^i - \{e\}$ is  in fact $P = \{e\}$ itself, and when we choose $f$ to be an element of $P$ then $f=e$.  It would be rather strange in this case (though strictly speaking, correct) to say that we add $(C^i - \{e\}) \cup (P - \{f\})$ to $B^i$.  That is why we treated the case where $C^i$ is independent in $H$ as a separate case.  

To complete the proof of the Theorem we must show that the final $B$, defined as $\cup B^i$,  is a basis and that the diameter of $S_B$ is $O(\sqrt r)$.  
Since we continue until everything is contracted away, it is clear that $B$ spans the matroid. 
Because each $B^i$ is independent after contracting all $B^j$ for $j < i$, Property~\ref{prop:incremental-independent} implies that $B$ is independent in the matroid $M$.  Thus $B$ is a basis.

We now analyze the diameter of $S_B$.
We first observe that the algorithm terminates in $O(\sqrt{r})$ iterations. This is because the number of possible iterations for which there exists a block containing a circuit of size $\Omega(\sqrt{r})$ can be at most $O(\sqrt{r})$ and once the size of the biggest circuit in each block has been reduced to $O(\sqrt{r})$, there can be at most $O(\sqrt{r})$ more iterations.

To complete the proof we will show that for each $i$, every edge in $B^i$ has a path of length $O(1)$ in $S_B$ to some edge in $B^{i-1}$.  
We will refer to the step of the construction described above.
As noted above, the fundamental circuit of $e$ in $B$ contains all of $C^i$.  Thus $e$ is joined by an edge of $S_B$ to every element of $C^i$.  
If $P$ is empty, then the fundamental circuit of $e$ also includes an element of $B^{i-1}$ and we are done.  
Otherwise,  as noted above, the fundamental circuit of $f$ in $B$ contains all of $P$ together with at least one element of $C^i$ and at least one element of $B^{i-1}$.  Thus $f$ is joined by an edge of $S_B$ to every element of $P$, and to at least one element of $C^i$ and to at least one element of $B^{i-1}$.
Therefore, in $S_B$ every element of $B^i$ is within distance 4 of some element of $B^{i-1}$.
\end{proof}

\section{Conclusion}
We studied the reconfiguration of labelled bases of a rank $r$ matroid and provided an upper bound of $O(r\log r)$ on the worst-case reconfiguration distance for graphic matroids, and a bound of $O(r^{1.5})$ for general matroids. The obvious next question is whether this is tight. The only lower bound we have so far is $\Omega(r)$.

Another natural question is to find the 
minimum number of basis exchange steps needed to transform one given labelled basis to another.  It an open question whether this problem is NP-hard or polynomial-time solvable.


\section*{Acknowledgements}

We could not have proved these results without major help from Jim Geelen.
The results first appeared in the second author's PhD thesis~\cite{Pathak}.  Theorem~\ref{theorem-spanning-tree-characterization} was proved jointly with Sander Vendonschot and Prosenjit Bose.

\bibliographystyle{abuser}   
\bibliography{matroid-reconfig}

\begin{thebibliography}{10}

\bibitem{BH09}
P.~Bose and F.~Hurtado.
\newblock Flips in planar graphs.
\newblock {\em Computational Geometry: Theory and Applications} 42:60--80,
  2009, \href{http://dx.doi.org/10.1016/j.comgeo.2008.04.001}%
{doi:10.1016/j.comgeo.2008.04.001}.

\bibitem{BLPV13}
P.~Bose, A.~Lubiw, V.~Pathak, and S.~Verdonschot.
\newblock Flipping edge-labelled triangulations.
\newblock {\em CoRR}, 2013, \url{arXiv:1310.1166v2}.

\bibitem{HHMR03}
C.~Hernando, F.~Hurtado, M.~Mora, and E.~Rivera-Campo.
\newblock Grafos de {\'a}rboles etiquetados y grafos de {\'a}rboles
  geom{\'e}tricos etiquetados.
\newblock {\em Proc. X Encuentros de Geometra Computacional}, pp.~13-19, 2003.

\bibitem{van-den-Heuvel}
J.~van~den Heuvel.
\newblock The complexity of change.
\newblock {\em Surveys in Combinatorics}, vol. 409, pp.~127--160. Cambridge
  University Press, 2013.

\bibitem{Ito-2011}
T.~Ito, E.~D. Demaine, N.~J.~A. Harvey, C.~H. Papadimitriou, M.~Sideri,
  R.~Uehara, and Y.~Uno.
\newblock On the complexity of reconfiguration problems.
\newblock {\em Theoretical Computer Science} 412(12):1054--1065, 2011,
  \href{http://dx.doi.org/10.1016/j.tcs.2010.12.005}%
{doi:10.1016/j.tcs.2010.12.005}.

\bibitem{MRWW05}
N.~McMurray, T.~J. Reid, B.~Wei, and H.~Wu.
\newblock Largest circuits in matroids.
\newblock {\em Advances in Applied Mathematics} 34(1):213--216, 2005,
  \href{http://dx.doi.org/10.1016/j.aam.2004.09.002}%
{doi:10.1016/j.aam.2004.09.002}.

\bibitem{MR08}
W.~Mulzer and G.~Rote.
\newblock Minimum-weight triangulation is {NP}-hard.
\newblock {\em J. ACM} 55(2):11:1--11:29, 2008,
  \href{http://dx.doi.org/10.1145/1346330.1346336}%
{doi:10.1145/1346330.1346336}.

\bibitem{Oxley-2011}
J.~Oxley.
\newblock {\em Matroid Theory, second edition}.
\newblock Oxford University Press, 2011.

\bibitem{Pathak}
V.~Pathak.
\newblock {\em Reconfiguring Triangulations}.
\newblock Ph.D. thesis, University of Waterloo, 2014.

\bibitem{Schrijver}
A.~Schrijver.
\newblock {\em Combinatorial Optimization: Polyhedra and Efficiency}.
\newblock Springer, 2002.

\bibitem{Tutte-linking}
W.~T. Tutte.
\newblock Menger{'}s theorem for matroids.
\newblock {\em J. Res. Nat. Bur. Standards Sect. B} 69:49--53, 1965.

\end{thebibliography}

\end{document}